\theoremstyle{plain}
\newtheorem{theorem}{Theorem}
\newtheorem{lemma}{Lemma}
\theoremstyle{definition}
\theoremstyle{example}
\theoremstyle{remark}
\numberwithin{equation}{section}
\begin{document}


\title[RNA pseudoknot structures with arc-length $\ge 3$
and stack-lenght $\ge \sigma$]
      { RNA pseudoknot structures with arc-length $\ge 3$
and stack-lenght $\ge \sigma$}
\author{Emma Y. Jin and Christian M. Reidys$^{\,\star}$}
\address{Center for Combinatorics, LPMC-TJKLC 
           \\
         Nankai University  \\
         Tianjin 300071\\
         P.R.~China\\
         Phone: *86-22-2350-6800\\
         Fax:   *86-22-2350-9272}
\email{reidys@nankai.edu.cn}
\thanks{}
\keywords{RNA secondary structure, pseudoknot, enumeration, generating
function}
\date{November, 2007}
\begin{abstract}
In this paper we enumerate $k$-noncrossing RNA pseudoknot structures
with given minimum arc- and stack-length. That is, we study the
numbers  of RNA pseudoknot structures with arc-length $\ge 3$,
stack-length $\ge \sigma$ and in which there are at most $k-1$
mutually crossing bonds, denoted by ${\sf T}_{k,\sigma}^{[3]}(n)$.
In particular we prove that the numbers of $3$, $4$ and
$5$-noncrossing RNA structures with arc-length $\ge 3$ and
stack-length $\ge 2$ satisfy ${\sf T}_{3,2}^{[3]}(n)^{}\sim
K_3\,n^{-5} 2.5723^n$, ${\sf T}^{[3]}_{4,2}(n)\sim K_4\,n^{-\frac{21}{2}}
\,3.0306^n$, and ${\sf T}^{[3]}_{5,2}(n)\sim K_5\, n^{-18}\,3.4092^n$,
respectively, where $K_3,K_4,K_5$ are constants.
Our results are of importance for
prediction algorithms for RNA pseudoknot structures.
\end{abstract}
\maketitle
{{\small
}}


\section{Introduction}\label{S:intro}

An RNA structure is the helical configuration of an RNA sequence, described by
its primary sequence of nucleotides {\bf A}, {\bf G}, {\bf U} and {\bf C}
together with the Watson-Crick ({\bf A-U}, {\bf G-C}) and ({\bf U-G}) base
pairing rules specifying which pairs of nucleotides can potentially form bonds.
Subject to these single stranded RNA form helical structures.
The function of many RNA sequences is oftentimes tantamount to their
structures. Therefore it is of central importance to understand RNA
structure in the context of studying the function of biological RNA,
and in the design process of artificial RNA.
In this paper we enumerate $k$-noncrossing RNA structures with
arc-length $\ge 3$ and stack-length $\ge \sigma$, where $\sigma\ge
2$. The main idea is to consider a certain subset of
$k$-noncrossing core-structures, that is structures with minimum arc
length $2$, in which there exists {no} two arcs of the form
$(i,j),(i+1,j-1)$ and {no} arcs of the
form $(i,i+2)$ with isolated $i+1$. We prove a bijection between
this subset of core structures with multiplicities and $k$-noncrossing
RNA structures with arc-length $\ge 3$ and stack-length $\ge
\sigma$, where $\sigma\ge 2$. Subsequently, we derive several
functional equations of generating functions, based on which transfer
theorems imply our asymptotic formulas.
The paper is relevant for prediction algorithms
of pseudoknot RNA since it proves that the numbers  of
$k$-noncrossing RNA structures with arc-length $\ge 3$ and
stack-length $\ge \sigma$ exhibit small exponential growth rates.
The results suggest a novel strategy for RNA pseudoknot prediction.

\section{Diagrams, matchings and structres}

A diagram is labeled graph over the vertex set $[n]=\{1,\dots, n\}$ with
degree $\le 1$, represented by drawing its vertices $1,\dots,n$
in a horizontal line and its arcs $(i,j)$, where $i<j$, in the upper
halfplane. The vertices and arcs correspond to nucleotides and Watson-Crick
({\bf A-U}, {\bf G-C}) and ({\bf U-G}) base pairs, respectively.
\begin{figure}[ht]
\centerline{%
\epsfig{file=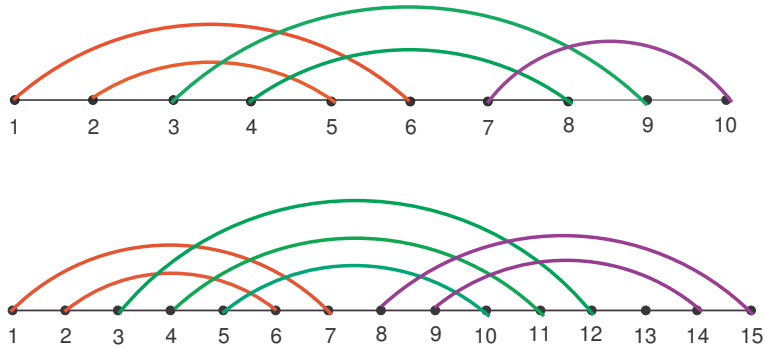,width=0.7\textwidth}\hskip15pt
 }
\caption{\small $k$-noncrossing diagrams. Top: $3$-noncrossing
diagram with arc-length $\ge 3$, $(2,5)$, $(7,10)$, the arc $(7,10)$
being isolated. Hence we have a $3$-noncrossing diagram $\lambda=3$,
$\sigma=1$ diagram without isolated vertices. Bottom:
$3$-noncrossing (no red/purple cross), $\lambda=4$, $\sigma=2$
diagram with isolated vertices $13$. } \label{F:lego0}
\end{figure}

We categorize diagrams according to the maximum number of mutually
crossing arcs, $k-1$, the  minimum arc-length, $\lambda$, and the
minimum stack-length, $\sigma$. Here the length of an arc $(i,j)$ is
$j-i$ and a stack of length $\sigma$ is a sequence of ``parallel''
arcs of the form
$((i,j),(i+1,j-1),\dots,(i+(\sigma-1),j-(\sigma-1)))$. In the
following we call a $k$-noncrossing diagram with arc-length $\ge 2$
and stack-length $\ge \sigma$ a $k$-noncrossing RNA structure. We
denote the set (number) of $k$-noncrossing RNA structures with
stack-size $\ge\sigma$ by $T_{k,\sigma}(n)$ (${\sf
T}_{k,\sigma}^{}(n)$) and refer to $k$-noncrossing  RNA structures
for $k\ge 3$ as pseudoknot RNA structures. A $k$-noncrossing
core-structure is a $k$-noncrossing RNA structures in which there
exists {\it no} two arcs of the form $(i,j),(i+1,j-1)$. We denote
the set (number) of core-structures having $h$ arcs by
${C}_{k}(n,h)$ (${\sf C}_{k}(n,h)$) and ${C}_{k}(n)$ (${\sf
C}_{k}(n)$) denotes the set (number) of core-structures. The set
(number) of RNA structures with arc-length $\ge 3$, is denoted by
$T_{k,\sigma}^{[3]}(n)$ (${\sf T}_{k,\sigma}^{[3]}(n)$). For $k=2$
we have RNA structures with no $2$ crossing arcs, i.e.~the
well-known RNA secondary structures, whose combinatorics was
pioneered by Waterman {\it et.al.}
\cite{Penner:93c,Waterman:79a,Waterman:78a,Waterman:94a,Waterman:80}.
RNA secondary structures are $T_{2,1}(n)$-structures. We denote by
$f_{k}(n,\ell)$ the number of $k$-noncrossing diagrams with
arbitrary arc-length and $\ell$ isolated points over $n$ vertices.
In Figure~\ref{F:4-pic} we display the various types of diagrams
involved.
\begin{figure}[ht]
\centerline{%
\epsfig{file=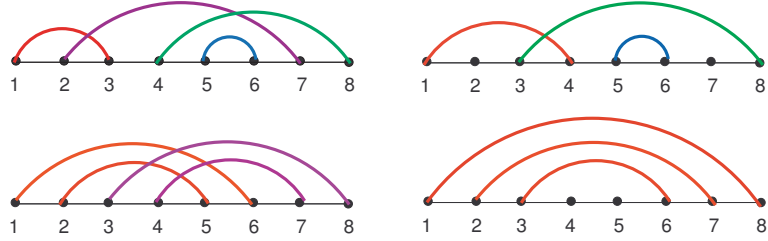,width=0.7\textwidth}\hskip15pt
 }
\caption{\small Basic diagram types: (a) perfect matching
($f_3(8,0)$), (b) partial matching with $1$-arc $(5,6)$ and isolated
points $2,7$ ($f_3(8,2)$), (c) structure with arc-length
$\ge 3$ and stack-length $\ge 2$ and no isolated points (${\sf
T}_{3,2}^{[3]}(8)$) and (d) structure with arc-length $\ge 3$,
stack-length $\ge 3$ and isolated points $4,5$ (${\sf T}_{2,3}^{[3]}(8)$).}
\label{F:4-pic}
\end{figure}

The following identities are due to Grabiner {\it et.al.}~\cite{Grabiner:93a}
\begin{eqnarray}\label{E:ww0}
\label{E:ww1}
\sum_{n\ge 0} f_{k}(n,0)\cdot\frac{x^{n}}{n!} & = &
\det[I_{i-j}(2x)-I_{i+j}(2x)]|_{i,j=1}^{k-1} \\
\label{E:ww2}
\sum_{n\ge 0}\left\{\sum_{\ell=0}^nf_{k}(n,\ell)\right\}\cdot\frac{x^{n}}{n!}
&= &
e^{x}\det[I_{i-j}(2x)-I_{i+j}(2x)]|_{i,j=1}^{k-1} \ ,
\end{eqnarray}
where $I_{r}(2x)=\sum_{j \ge 0}\frac{x^{2j+r}}{{j!(r+j)!}}$ denotes
the hyperbolic Bessel function of the first kind of order $r$.
Eq.~(\ref{E:ww1})
and (\ref{E:ww2}) allow ``in principle'' for explicit computation of
the numbers $f_k(n,\ell)$. In particular for $k=2$ and $k=3$ we have
the formulas
\begin{equation}\label{E:2-3}
f_2(n,\ell)  =  \binom{n}{\ell}\,C_{(n-\ell)/2}\quad
\text{\rm and}\quad  f_{3}(n,\ell)=
{n \choose \ell}\left[C_{\frac{n-\ell}{2}+2}C_{\frac{n-\ell}{2}}-
      C_{\frac{n-\ell}{2}+1}^{2}\right] \ ,
\end{equation}
where $C_m$ denotes the $m$-th Catalan number. $f_{3}(n,\ell)$ results from
a determinant formula enumerating pairs of nonintersecting Dyck-paths.
In view of
$
f_{k}(n,\ell) ={n \choose \ell} f_{k}(n-\ell,0)
$
everything can be reduced to perfect matchings, where we have the
following situation: there exists an asymptotic approximation of the
hyperbolic Bessel function due to \cite{Odlyzko:95a} and employing
the subtraction of singularities-principle \cite{Odlyzko:95a} one
can prove
\begin{equation}\label{E:f-k-imp}
\forall\, k\in\mathbb{N};\qquad  f_k(2n,0)\sim \varphi_k(n)
\left(\frac{1}{\rho_k}\right)^n\  ,
\end{equation}
where $\rho_k$ is the dominant real singularity of $\sum_{n\ge
0}f_k(2n,0)z^n$ and $\varphi_k(n)$ is a polynomial over $n$.
Via Hadamard's formula, $\rho_k$ can be expressed as
\begin{equation}\label{E:rho-k}
\rho_k=\lim_{n\to \infty}(f_k(2n,0))^{-\frac{1}{2n}} \ .
\end{equation}
Eq.~(\ref{E:f-k-imp}) allows for any given $k\ge 2$ to obtain
$\varphi_k(n)$, explicitly.

As for the generating function and asymptotics of $k$-noncrossing
RNA structures we have the following results from
\cite{Reidys:07pseu,Reidys:07asy1,Reidys:07asy2}. First the number
of $k$-noncrossing RNA structures with $(\frac{n-\ell}{2})$ arcs,
${\sf T}_{k,1}(n,\frac{n-\ell}{2})$, and the number of
$k$-noncrossing RNA structures, ${\sf T}_{k,1}(n)$, are given by
\begin{eqnarray}\label{E:ddd}
{\sf T}_{k,1}(n,\frac{n-\ell}{2})
& = & \sum_{b=0}^{\lfloor n/2\rfloor}(-1)^{b}{n-b \choose b}
f_{k}(n-2b,\ell) \\
\label{E:sum}
{\sf T}_{k,1}(n)
& = & \sum_{b=0}^{\lfloor n/2\rfloor}(-1)^{b}{n-b \choose b}
\left\{\sum_{\ell=0}^{n-2b}f_{k}(n-2b,\ell)\right\} \ ,
\end{eqnarray}
where $\{\sum_{\ell=0}^{n-2b}f_{k}(n-2b,\ell)\}$ is given via
eq.~{\rm (\ref{E:ww2})}.
Secondly we have
\begin{eqnarray}
\label{E:konk3} {\sf T}_{3,1}(n) & \sim & \frac{10.4724\cdot
4!}{n(n-1)\dots(n-4)}\,
\left(\frac{5+\sqrt{21}}{2}\right)^n \\
{\sf T}_{3,1}^{[3]}(n) & \sim &
\frac{6.11170\cdot 4!}{n(n-1)\dots(n-4)} \ 4.54920^n \ .
\end{eqnarray}

The particular class of $k$-noncrossing core-structures, i.e.~structures
in which there exists {no} two arcs of the form $(i,j),(i+1,j-1)$ will play
a central role in the following enumerations:
\begin{theorem}\cite{Reidys:07lego}\,{\bf (Core-structures)}\label{T:core}
Suppose $k\in\mathbb{N}$, $k\ge 2$,  let $x$ be an indeterminant,
$\rho_k$ the dominant, positive real singularity of
$\sum_{n\ge 0}f_k(2n,0)z^n$ (eq.~(\ref{E:rho-k})) and
$u_1(x)=\frac{1}{1+x^2}$. Then the numbers of $k$-noncrossing
core-structures over $[n]$, ${\sf C}_k(n)$ are given by
\begin{equation}\label{E:inversion}
{\sf C}_{k}^{ }(n,h)=
\sum_{b=0}^{h-1}(-1)^{h-b-1}{h-1 \choose b}
{\sf T}^{ }_{k,1}(n-2h+2b+2,b+1) \ .
\end{equation}
Furthermore we have the functional equation
\begin{eqnarray}\label{E:C1}
\sum_{n \ge 0}{\sf C}_{k}^{}(n)\ x^n & = &
\frac{1}{u_1x^2-x+1}\sum_{n \ge
0}f_{k}(2n,0)\left(\frac{\sqrt{u_1}x}{u_1x^2-x+1}\right)^{2n}
\end{eqnarray}
and
\begin{equation}\label{E:growth-free}
{\sf C}_{k}^{}(n)\sim \varphi_k(n)\ \left(\frac{1}{\kappa_{k}}\right)^n
\end{equation}
where $\kappa_{k}$ is a dominant singularity of
$\sum_{n \ge 0}{\sf C}_{k}^{}(n)$ and the minimal real solution
of the equation $\frac{\sqrt{u_1}\ x}{u_1x^2-x+1}  =  \rho_k$
and $\varphi_k(n)$ is a polynomial over $n$ derived from the
asymptotic expression of $f_k(2n,0)\sim \varphi_k(n)\left(\frac{1}
{\rho_k}\right)^n$ of eq.~(\ref{E:f-k-imp}).
\end{theorem}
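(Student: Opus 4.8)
The plan is to prove the three assertions in turn, starting from the combinatorial identity \eqref{E:inversion}, then establishing the functional equation \eqref{E:C1}, and finally extracting the asymptotics \eqref{E:growth-free} by singularity analysis. For the inversion formula, I would first set up the ``insertion of stacks'' correspondence: any $k$-noncrossing RNA structure on $[m]$ with $b+1$ arcs is obtained from a unique core-structure by replacing each arc $(i,j)$ of the core by a stack $((i,j),(i+1,j-1),\dots)$ of some length $\ge 1$, reading off a composition of the total number of arcs $h$ into $b+1$ positive parts. This gives a relation expressing ${\sf T}_{k,1}(n-2h+2b+2,b+1)$ as a positive sum over ${\sf C}_k(\,\cdot\,,\,\cdot\,)$ weighted by the number of such compositions, namely $\binom{h-1}{b}$; Möbius/binomial inversion on the index then yields \eqref{E:inversion}. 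I do not expect this step to be the main obstacle, since it is the standard ``core $\leftrightarrow$ structure'' bijection already used in \cite{Reidys:07lego}, but one must be careful with the base cases $h=0$ and with isolated vertices so the arc-count bookkeeping is exact.

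Next I would translate the stack-insertion bijection into generating functions. Let $\mathbf{C}_k(x)=\sum_n {\sf C}_k(n)x^n$ and recall $\sum_n f_k(2n,0)z^n$ enumerates perfect matchings weighted by $k$-noncrossingness. The key observation is that a core-structure is determined by (i) a perfect matching on the set of $2h$ paired vertices recording the crossing pattern, and (ii) a distribution of the $n-2h$ isolated vertices into the $2h+1$ gaps, with the core condition forbidding the two ``parallel'' configurations. Tracking a formal variable for arcs, each arc contributes a factor that, after summing the forbidden-pattern corrections, collapses to the substitution $z\mapsto \bigl(\tfrac{\sqrt{u_1}\,x}{u_1x^2-x+1}\bigr)^2$ together with the prefactor $\tfrac{1}{u_1x^2-x+1}$, where $u_1=u_1(x)=\tfrac{1}{1+x^2}$ encodes the geometric series governing where isolated vertices and the excluded $(i,i+2)$-with-isolated-$(i+1)$ configurations may sit. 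Equating coefficients gives \eqref{E:C1}. The delicate point here—and what I expect to be the main obstacle—is verifying that the exclusion of the two forbidden local configurations is \emph{exactly} what produces the rational function $u_1x^2-x+1$ in the denominator rather than some other correction; this requires a careful inclusion–exclusion at the level of a single arc and its immediate neighborhood, and getting the $+2$ in the index shift $n-2h+2b+2$ consistent with it.

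Finally, for \eqref{E:growth-free} I would invoke singularity analysis à la \cite{Odlyzko:95a}. From \eqref{E:f-k-imp} the series $F_k(z)=\sum_n f_k(2n,0)z^n$ has dominant singularity $\rho_k$ with a known subexponential factor $\varphi_k$; composing with the algebraic substitution $z=\vartheta(x):=\bigl(\tfrac{\sqrt{u_1(x)}\,x}{u_1(x)x^2-x+1}\bigr)^2$, the dominant singularity $\kappa_k$ of $\mathbf{C}_k(x)$ is the minimal positive real solution of $\vartheta(x)=\rho_k$, i.e.\ of $\tfrac{\sqrt{u_1}\,x}{u_1x^2-x+1}=\rho_k$ (one checks $\vartheta$ is analytic and strictly increasing on $[0,\kappa_k)$ so no earlier singularity of the composition arises, and the prefactor $\tfrac{1}{u_1x^2-x+1}$ is analytic there). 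Since $\vartheta$ is analytic and nonzero-derivative at $\kappa_k$, the singular expansion of $F_k$ transfers directly, the exponent of the subexponential term is preserved, and the transfer theorem yields ${\sf C}_k(n)\sim \varphi_k(n)\,\kappa_k^{-n}$ with $\varphi_k$ the same polynomial (up to the constant coming from $\vartheta'(\kappa_k)$ and the prefactor, which is absorbed into $\varphi_k$). This last step is routine given the machinery cited; the only care needed is to confirm that $x=\kappa_k$ is the unique dominant singularity on $|x|=\kappa_k$, which follows from positivity of the coefficients of $\mathbf{C}_k$ together with aperiodicity.
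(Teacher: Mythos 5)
First, note that the paper itself gives no proof of Theorem~\ref{T:core}: it is imported verbatim from \cite{Reidys:07lego}. So your proposal can only be measured against the machinery this paper develops for the analogous Theorem~\ref{T:arc-3}. Judged that way, your plan for \eqref{E:inversion} and for \eqref{E:growth-free} is sound and essentially the standard route: the stack-insertion correspondence gives ${\sf T}_{k,1}(n,h)=\sum_{b=0}^{h-1}\binom{h-1}{b}{\sf C}_k(n-2h+2b+2,b+1)$ (the composition count and the vertex bookkeeping $n-2(h-b-1)$ both check out), binomial inversion at fixed number of isolated vertices yields \eqref{E:inversion}, and the asymptotics follow from Pringsheim plus the transfer argument exactly as in the proof of Theorem~\ref{T:arc-3} (the $\Delta$-analyticity of $\sum_n f_k(2n,0)z^n$ needed there is quoted from \cite{Wang:07}, and the constant $K$ of Theorem~\ref{T:transfer} is absorbed into $\varphi_k$).

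The soft spot is your derivation of \eqref{E:C1}, and you have misidentified where $u_1$ comes from. You propose a direct decomposition ``matching $+$ placement of isolated vertices into gaps, with forbidden local patterns excluded,'' with $u_1$ ``encoding the geometric series governing where isolated vertices may sit.'' That is not the mechanism, and as stated it would stall: the core condition (no pair $(i,j),(i+1,j-1)$) is a condition on \emph{pairs of arcs}, not on gap occupancies, and it does not localize to a per-gap weight. The derivation that actually works — and that the paper itself reproduces as eq.~\eqref{E:aha2} inside the proof of Theorem~\ref{T:arc-3} — is to compose two functional identities you already have: the core-to-structure inflation identity (eq.~\eqref{E:a} with $\sigma=1$), $\sum_{n,h}{\sf T}_{k,1}(n,h)u^hx^n=\sum_{n,h}{\sf C}_k(n,h)\bigl(\tfrac{u}{1-ux^2}\bigr)^hx^n+\tfrac{x}{1-x}$, and the structure-to-matching identity of Lemma~\ref{L:arc-2}; then eliminate ${\sf T}_{k,1}$ and \emph{specialize} $u$ to the solution of $\tfrac{u}{1-ux^2}=1$, which is precisely $u_1=\tfrac{1}{1+x^2}$. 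No per-arc inclusion–exclusion over forbidden neighborhoods is needed. Carrying this out also shows that the right-hand side acquires an additive correction $-\tfrac{x}{1-x}$ (from the $h=0$ terms), which is present in eq.~\eqref{E:aha2} but absent from the statement \eqref{E:C1} as printed; your proof should produce it, and its absence from the stated theorem is immaterial for \eqref{E:growth-free} since $\tfrac{x}{1-x}$ is analytic at $\kappa_k<1$.
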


The following functional identity \cite{Reidys:07asy1} relates
the bivariate generating function for ${\sf T}_{k,1}(n,h)$, the number of
RNA pseudoknot structures with $h$ arcs to the generating
function of $k$-noncrossing perfect matchings. It will be instrumental
for the proof of Theorem~\ref{T:arc-3} in Section~\ref{T:arc-3}.

\begin{lemma}\label{L:arc-2}
Let $k\in\mathbb{N}$, $k\ge 2$ and $z,u$ be indeterminants over
$\mathbb{C}$. Then we have
\begin{equation}\label{E:rr}
\sum_{n\ge 0} \sum_{h\le n/2} {\sf T}_{k,1}(n,h) \ u^{2h} z^n =
\frac{1}{u^2z^2-z+1}
\sum_{n\ge 0} f_k(2n,0) \left(\frac{uz}{u^2z^2-z+1}\right)^{2n} \ .
\end{equation}
In particular we have for $u=1$,
\begin{equation}\label{E:oha}
\sum_{n\ge 0} {\sf T}_{k,1}(n) \ z^{n} =
\frac{1}{z^2-z+1}\,
\sum_{n\ge 0} f_k(2n,0) \left(\frac{z}{z^2-z+1}\right)^{2n} \ .
\end{equation}
\end{lemma}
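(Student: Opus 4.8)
The plan is to establish the bivariate identity \eqref{E:rr} by inverting the combinatorial relationship between arbitrary $k$-noncrossing RNA structures (arc-length $\ge 2$, no stack constraint) and $k$-noncrossing perfect matchings, keeping track of the number of arcs via the variable $u$. The starting point is the explicit inclusion--exclusion formula \eqref{E:ddd}, namely ${\sf T}_{k,1}(n,\frac{n-\ell}{2})=\sum_{b=0}^{\lfloor n/2\rfloor}(-1)^b\binom{n-b}{b}f_k(n-2b,\ell)$, together with the reduction $f_k(m,\ell)=\binom{m}{\ell}f_k(m-\ell,0)$ and the fact that $f_k(m-\ell,0)=0$ unless $m-\ell$ is even. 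First I would rewrite $h=\frac{n-\ell}{2}$ so that a structure on $[n]$ with $h$ arcs has $\ell=n-2h$ isolated vertices, and assemble the generating function $\sum_{n\ge0}\sum_{h\le n/2}{\sf T}_{k,1}(n,h)\,u^{2h}z^n$. Substituting \eqref{E:ddd} and exchanging the order of summation, the sum over $b$ produces a factor $\sum_b(-1)^b\binom{n-b}{b}z^{2b}\cdot(\text{shift})$, which is the classical generating-function avatar of the Motzkin/Catalan-type recursion: the series $\sum_{m,b}(-1)^b\binom{m-b}{b}z^{m}$-type kernel sums to $\frac{1}{z^2-z+1}$ after the geometric bookkeeping. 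Concretely, one recognizes that passing from perfect matchings to structures with isolated points and with ``non-nested consecutive arcs'' removed corresponds to the operator $z\mapsto \frac{z}{z^2-z+1}$ on the matching generating function, with an overall prefactor $\frac{1}{z^2-z+1}$ accounting for the outermost insertions.

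More precisely, the key step is the manipulation
\begin{align*}
\sum_{n\ge 0}\sum_{h\le n/2}{\sf T}_{k,1}(n,h)\,u^{2h}z^n
&=\sum_{n\ge 0}\sum_{h\le n/2}u^{2h}z^n\sum_{b=0}^{\lfloor n/2\rfloor}(-1)^b\binom{n-b}{b}f_k(n-2b,\,n-2h)\\
&=\sum_{n\ge 0}\sum_{h\le n/2}u^{2h}z^n\sum_{b=0}^{\lfloor n/2\rfloor}(-1)^b\binom{n-b}{b}\binom{n-2b}{n-2h}f_k(2h-2b,0).
\end{align*}
Setting $m=n-2b$ and then isolating the dependence on $h$ and $b$, the inner double sum over $n$ and the isolated-point count factors through the identity $\sum_{\ell\ge0}\binom{m}{\ell}z^{\ell}=(1+z)^m$-flavored algebra combined with the signed binomial convolution $\sum_{b}(-1)^b\binom{n-b}{b}w^b=\frac{1}{\text{quadratic in }w}$ evaluated at the appropriate argument. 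Carrying the bookkeeping through, the right-hand side collapses to $\frac{1}{u^2z^2-z+1}\sum_{h\ge0}f_k(2h,0)\bigl(\frac{uz}{u^2z^2-z+1}\bigr)^{2h}$, which is exactly \eqref{E:rr}. Alternatively — and this is probably the cleaner route to write up — one can invoke Theorem \ref{T:core}: equation \eqref{E:C1} already gives the $u=1$, core-structure version of precisely this substitution $x\mapsto \frac{\sqrt{u_1}x}{u_1x^2-x+1}$ with $u_1=\frac{1}{1+x^2}$, and the bivariate statement \eqref{E:rr} is the analogous computation where the ``weight per arc'' is $u$ rather than being tied to the core-reduction parameter; one then only needs to redo the symbolic substitution with $u$ kept free, which is a routine (if slightly tedious) generating-function calculation of the same shape.

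The specialization $u=1$ in \eqref{E:oha} is then immediate: substituting $u=1$ into \eqref{E:rr} replaces $u^2z^2-z+1$ by $z^2-z+1$ throughout, and $\sum_{h}{\sf T}_{k,1}(n,h)={\sf T}_{k,1}(n)$ by definition, giving $\sum_{n\ge0}{\sf T}_{k,1}(n)z^n=\frac{1}{z^2-z+1}\sum_{n\ge0}f_k(2n,0)\bigl(\frac{z}{z^2-z+1}\bigr)^{2n}$.

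The main obstacle I anticipate is the interchange and re-indexing of the three nested sums (over $n$, over $h$, and over $b$) while correctly tracking which variable carries the $u$-weight: one must verify that after substituting $f_k(n-2b,\ell)=\binom{n-2b}{\ell}f_k(2h-2b,0)$ and collecting powers of $z$ and $u$, the ``non-$f_k$'' part of the sum genuinely factors as the geometric-type kernel $\frac{1}{u^2z^2-z+1}$ raised to the correct power (namely $2h+1$) and nothing else. Checking that the signs from the inclusion--exclusion in \eqref{E:ddd} combine with the binomial coefficients to produce exactly the denominator $u^2z^2-z+1$ — and not some spurious extra factor — is the delicate point; a convenient sanity check is to confirm the identity holds at the level of the constant and linear coefficients in $z$ and, independently, against the known $u=1$ case \eqref{E:oha} which follows from \eqref{E:C1} of Theorem \ref{T:core}.
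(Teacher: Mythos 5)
First, a point of reference: this paper does not actually prove Lemma~\ref{L:arc-2}; it imports it from \cite{Reidys:07asy1}. So there is no in-paper proof to match against, and your proposal has to be judged on its own merits. Your main route --- start from eq.~(\ref{E:ddd}), reduce to perfect matchings via $f_k(m,\ell)=\binom{m}{\ell}f_k(m-\ell,0)$, and collapse the resulting triple sum into the kernel $\frac{1}{u^2z^2-z+1}$ --- is the natural derivation and is consistent with the techniques this paper uses elsewhere, so the plan is sound in outline.

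There is, however, one concretely wrong identity at exactly the step you yourself flag as delicate. You assert that the signed sum $\sum_{b}(-1)^b\binom{n-b}{b}w^b$ equals the reciprocal of a quadratic; for fixed $n$ this sum is a (Chebyshev-like) polynomial, not a rational function, and the $b$-sum in eq.~(\ref{E:ddd}) is not a plain Cauchy product because $\binom{n-b}{b}$ couples $b$ to $n$ while the argument of $f_k$ is shifted by $2b$. The correct mechanism is: substitute $m=h-b$, $n'=n-2b$, so the inner sum becomes $\sum_b\binom{n'+b}{b}(-u^2z^2)^b$ against a series in $z^{n'}$, and then invoke the negative-binomial identity $\sum_b\binom{n'+b}{b}y^b=(1-y)^{-(n'+1)}$ (equivalently the Laplace-transform computation the paper carries out in the proof of Theorem~\ref{T:arc-3}). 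This produces the substitution $z\mapsto z/(1+u^2z^2)$ with prefactor $1/(1+u^2z^2)$ acting on the partial-matching generating function $\frac{1}{1-z}\sum_m f_k(2m,0)\bigl(\frac{uz}{1-z}\bigr)^{2m}$, and the composition $\frac{1}{1+u^2z^2}\cdot\frac{1}{1-\frac{z}{1+u^2z^2}}=\frac{1}{u^2z^2-z+1}$ then yields eq.~(\ref{E:rr}). So the gap is repairable, but the identity as you wrote it would not deliver the denominator. Separately, your ``cleaner alternative'' via eq.~(\ref{E:C1}) should be dropped: within this paper's logic the bivariate core identity (eq.~(\ref{E:aha2})) is \emph{derived from} Lemma~\ref{L:arc-2}, so arguing in that direction is circular. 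The specialization $u=1$ is, as you say, immediate.
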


In view of Lemma~\ref{L:arc-2} it is of interest to deduce
relations between the coefficients from the equality of generating
functions. The class of theorems that deal with this deduction are
called transfer-theorems \cite{Flajolet:07a}. One key ingredient in
this framework is a specific domain in which the functions in
question are analytic, which is ``slightly'' bigger than their
respective radius of convergence. It is tailored for extracting the
coefficients via Cauchy's integral formula:
given two numbers $\phi,R$, where $R>1$ and $0<\phi<\frac{\pi}{2}$ and
$\rho\in\mathbb{R}$ the open domain $\Delta_\rho(\phi,R)$ is defined as
\begin{equation}
\Delta_\rho(\phi,R)=\{ z\mid \vert z\vert < R, z\neq \rho,\,
\vert {\rm Arg}(z-\rho)\vert >\phi\}
\end{equation}
A domain is a $\Delta_\rho$-domain if it is of the form
$\Delta_\rho(\phi,R)$ for some $R$ and $\phi$.
A function is $\Delta_\rho$-analytic if it is analytic in some
$\Delta_\rho$-domain.
We use the notation
\begin{equation}\label{E:genau}
\left(f(z)=O\left(g(z)\right) \
\text{\rm as $z\rightarrow \rho$}\right)\quad \Longleftrightarrow \quad
\left(f(z)/g(z) \ \text{\rm is bounded as $z\rightarrow \rho$}\right)
\end{equation}
and if we write $f(z)=O(g(z))$ it is implicitly assumed that $z$
tends to a (unique) singularity. $[z^n]\,f(z)$ denotes the
coefficient of $z^n$ in the power series expansion of $f(z)$ around
$0$.

\begin{theorem}\label{T:transfer}\cite{Flajolet:05}
Let $f(z),g(z)$ be a $\Delta_\rho$-analytic functions given by power
series $f(z)=\sum_{n\ge 0} a_nz^n$ and $g(z)=\sum_{n\ge 0}b_n z^n$.
Suppose $f(z)= O(g(z))$ for all $z\in\Delta_\rho$ and $b_n\sim
\varphi(n)(\rho^{-1})^n$, where $\varphi(n)$ is a polynomial over
$n$. Then
\begin{equation}
a_n = [z^n]\, f(z)  \sim  K \ [z^n]\, g(z)= K\,b_n\sim K\,
\varphi(n)(\rho^{-1})^n
\end{equation}
for some constant $K$.
\end{theorem}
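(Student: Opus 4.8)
The plan is to prove this by the singularity-analysis method of Flajolet and Odlyzko, reading the hypotheses as saying that $g$ is a standard comparison function and that $f$ is dominated by it near the singularity. After the rescaling $z\mapsto \rho z$ we may assume $\rho=1$, so that $b_n\sim\varphi(n)$ with $\varphi$ a polynomial of some degree $d$; since $\varphi(n)$ carries no oscillation, $z=1$ is then the unique dominant singularity of $g$, and on the standard scale of singularity analysis $g$ behaves near $1$ like $(1-z)^{-(d+1)}$ (the coefficient asymptotics $b_n\sim\varphi(n)$ together with $\Delta_1$-analyticity pin this down up to lower-order terms).

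Next I would extract coefficients by Cauchy's formula,
\[
a_n=[z^n]f(z)=\frac{1}{2\pi i}\oint_{\gamma}\frac{f(z)}{z^{n+1}}\,dz,
\]
deforming $\gamma$, inside the common $\Delta_1$-domain of analyticity of $f$ and $g$, onto the Flajolet--Odlyzko Hankel contour $\gamma_n$: a circular arc of radius $1/n$ about $z=1$, closed off by two line segments leaving $1$ at an angle strictly between $\phi$ and $\pi/2$. Using $|f(z)|\le C|g(z)|$ on $\Delta_1$ and the standard contour estimates — the arc contributing the main term and the two segments only an exponentially smaller quantity — one obtains
\[
|a_n|\le\frac{C}{2\pi}\oint_{\gamma_n}\frac{|g(z)|}{|z|^{n+1}}\,|dz|=O(\varphi(n))=O(b_n),
\]
which already gives $a_n=O(b_n)$. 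To upgrade this to the asymptotic equivalence claimed, put $K=\lim_{z\to 1}f(z)/g(z)$ and write $f=K\,g+(f-K\,g)$; the first term contributes $K[z^n]g(z)=K\,b_n\sim K\,\varphi(n)$, while $f-K\,g=o(g)$ near $1$, so repeating the contour estimate with $o(\cdot)$ in place of $O(\cdot)$ yields $[z^n](f-K\,g)=o(b_n)$, and summing the two contributions finishes the proof.

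The delicate point is precisely this passage from the $O$-bound to the genuine equivalence $a_n\sim K\,b_n$: mere boundedness of $f/g$ on $\Delta_1$ does not suffice, since a quotient that oscillates at the singularity (say $\sin\log(1-z)$) produces coefficients oscillating at the scale $b_n$, so the argument must use — as it does in every application here, where $f$ and $g$ are linked by an explicit functional equation — that $f(z)/g(z)$ actually converges as $z\to 1$. The remaining steps are routine singularity-analysis bookkeeping: checking that the Cauchy contour may be pushed into the shared $\Delta_1$-domain, and estimating the Hankel integrals of $g$ (which reproduces $b_n\sim\varphi(n)$ up to the $\Gamma$-factor) and of the error $f-K\,g$ uniformly along $\gamma_n$, in particular on the two rays where $|1-z|$ is bounded below but $|z|^{-n}$ must still overwhelm the polynomial factor.
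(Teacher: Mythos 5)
The paper offers no proof of this statement: it is quoted verbatim from Flajolet--Fill--Kapur as a black box, so there is no internal argument to compare yours against. Your sketch is the standard Flajolet--Odlyzko singularity-analysis route (Cauchy integral over a Hankel contour inside the shared $\Delta_\rho$-domain, main contribution from the arc near the singularity, exponentially small contribution from the rays), and as a reconstruction of the cited result it is essentially the right machinery. Your most valuable observation is also correct: the hypotheses as printed do not imply the conclusion. The condition $f(z)=O(g(z))$ on a $\Delta_\rho$-domain is exactly the hypothesis of the \emph{O}-transfer and yields only $a_n=O(b_n)$; a bounded but non-convergent quotient $f/g$ (your $\sin\log(1-z)$ example) defeats the claimed equivalence $a_n\sim K\,b_n$. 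The statement is only true with the additional hypothesis that $f(z)/g(z)\to K$ as $z\to\rho$ in the $\Delta$-domain, which is what actually holds in every application in this paper, where $f$ and $g$ are tied by an explicit functional equation. Your decomposition $f=Kg+(f-Kg)$ with the little-$o$ transfer applied to the remainder is the correct way to finish under that strengthened hypothesis.

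One further point you gloss over: the transfer theorems take their input on the \emph{function} side, i.e.\ they need $g(z)=O\bigl((1-z/\rho)^{-(d+1)}\bigr)$ (and the corresponding singular expansion for the equivalence), whereas the stated hypothesis gives only the \emph{coefficient} asymptotics $b_n\sim\varphi(n)\rho^{-n}$. The implication from coefficient asymptotics back to local behaviour of $g$ at $z=\rho$ is not automatic, even for $\Delta$-analytic functions, so your remark that the coefficient asymptotics ``pin down'' $g\sim c(1-z)^{-(d+1)}$ needs justification. In the paper's applications this is harmless because $g$ is an explicitly known function (a composition of $\sum_n f_k(2n,0)z^n$ with algebraic substitutions) whose singular expansion at $\rho_k$ is computed directly; but as a proof of the theorem in the generality stated, this step is a genuine gap that should either be closed by adding the singular expansion of $g$ to the hypotheses or acknowledged as an assumption.
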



\section{Exact Enumeration}\label{S:core-3}


Our first result, Theorem~\ref{T:core-3}, enumerates $k$-noncrossing RNA
structures with arc-length $\ge 3$ and stack-length $\ge \sigma$. The
structure of the formula is the exact analogue of the M\"obius inversion
of eq.~({\ref{E:inversion}}) \cite{Reidys:07lego}, which relates the
numbers of all structures and the numbers of core-structures:
$
{\sf T}_{k,\sigma}^{}(n,h)=\sum_{b=\sigma-1}^{h-1}
{b+(2-\sigma)(h-b)-1 \choose h-b-1} {\sf C}_k(n-2b,h-b)
$.
While the latter cannot be used in order to enumerate $k$-noncrossing
structures with arc-length $\ge 3$, see Figure~\ref{F:lego4},
the set
\begin{figure}[ht]
\centerline{%
\epsfig{file=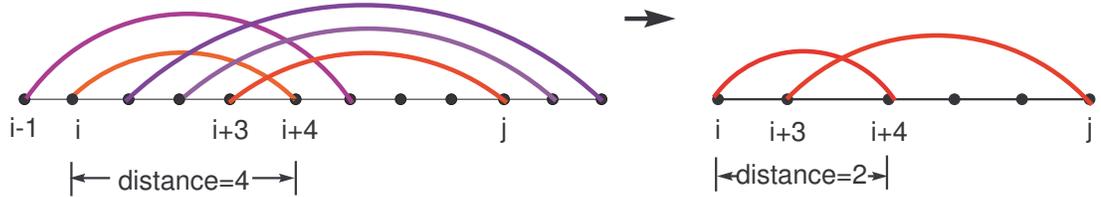,width=1\textwidth}\hskip15pt
 }
\caption{\small Core-structures will in general have $2$-arcs: the
structure $\delta\in T_{3,2}(12)$ (lhs) is mapped into its core
$c(\delta)$ (rhs). Clearly $\delta$ has arc-length $\ge 4$ and as a
consequence of the collapse of the stack
$((i+1,j+3),(i+2,j+2),(i+3,j))$ (the purple arcs are being removed)
into the arc $(i+3,j)$, $c(\delta)$ contains the arc $(i,i+4)$, which is,
after relabeling, a $2$-arc.
}
 \label{F:lego4}
\end{figure}
\begin{equation}\label{E:*}
C^*_k(n,h)=\{\delta\mid \delta\in C_k(n,h);\,  \not\exists\
(i,i+2);\ \text{\rm $i+1$ is an isolated vertex.}\,\}
\end{equation}
turns out to be the key.

\begin{theorem}\label{T:core-3}
Suppose we have $k,h,\sigma\in \mathbb{N}$, $k\ge 2$, $h\le n/2$ and
$\sigma\ge 2$. Then the numbers of $k$-noncrossing RNA structures with
arc-length $\ge 3$ and stack-length $\ge \sigma$ having $h$ arcs is given
by
\begin{equation}\label{E:relation1}
{\sf T}_{k,\sigma}^{[3]}(n,h)=\sum_{b=\sigma-1}^{h-1}
{b+(2-\sigma)(h-b)-1 \choose h-b-1} {\sf C}_k^*(n-2b,h-b)
\end{equation}
where ${\sf C}_{k}^{*}(n,h)$ satisfies
\begin{equation}\label{E:inv-3}
{\sf C}_{k}^{*}(n,h)=\sum_{j=0}^{h}(-1)^{j}
\binom{n-2j}{j} {\sf C}_k(n-3j,h-j)
\end{equation}
and ${\sf C}_k(n',h')$ is given by Theorem~\ref{T:core}.
\end{theorem}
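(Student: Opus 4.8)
The plan is to prove the two displayed identities separately, since the final clause of the theorem is simply a pointer to Theorem~\ref{T:core}. For \eqref{E:relation1}, I would argue bijectively. Given a $k$-noncrossing RNA structure $S$ with arc-length $\ge 3$ and stack-length $\ge\sigma$ having $h$ arcs, I would collapse each maximal stack into a single arc, obtaining a structure $c(S)$ with no two arcs of the form $(i,j),(i+1,j-1)$, i.e.\ a $k$-noncrossing core-structure. The point of restricting to $C_k^*$ rather than all of $C_k$ is exactly the phenomenon in Figure~\ref{F:lego4}: collapsing a stack can never create a $2$-arc $(i,i+2)$ with $i+1$ isolated, because the vertex $i+1$ would have had to carry an arc in $S$ (arc-length $\ge 3$ forbids a $1$-arc at the top of the collapsed stack reaching down to $i+1$); one must check the two excluded configurations in \eqref{E:*} really are the only obstructions, and that after relabeling the image lands in $C_k^*(n-2b, h-b)$ where $b$ is the total number of arcs removed. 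Conversely, to recover $S$ from a core-structure $\delta\in C_k^*(n-2b,h-b)$ one must re-inflate each of the $h-b$ arcs of $\delta$ into a stack of length $\ge\sigma$, distributing the $b$ removed arcs among the $h-b$ stacks. The number of ways to do this is the number of compositions of $b$ into $h-b$ parts each $\ge \sigma-1$, which by a standard stars-and-bars count is $\binom{b+(2-\sigma)(h-b)-1}{h-b-1}$; summing over the admissible range $\sigma-1\le b\le h-1$ gives \eqref{E:relation1}. (This is the exact analogue of the inversion relating ${\sf T}_{k,\sigma}$ and ${\sf C}_k$ quoted just before the theorem, with ${\sf C}_k$ replaced by ${\sf C}_k^*$.)

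For \eqref{E:inv-3}, the strategy is inclusion–exclusion to pass from ${\sf C}_k(n,h)$ (core-structures with no $(i,j),(i+1,j-1)$ but possibly with $2$-arcs over an isolated middle vertex) to ${\sf C}_k^*(n,h)$ (where additionally no $2$-arc $(i,i+2)$ has $i+1$ isolated). I would show that deleting the $j$ ``bad'' vertices associated with $j$ prescribed bad $2$-arcs — that is, removing for each such arc its isolated interior vertex and one of its endpoints, or equivalently contracting the $2$-arc together with its isolated vertex — induces a bijection between core-structures on $[n]$ with $h$ arcs having a chosen set of $j$ marked bad $2$-arcs and core-structures on $[n-3j]$ with $h-j$ arcs, with the remaining freedom being the choice of where the $j$ deleted blocks of three consecutive vertices sit; this choice is counted by $\binom{n-2j}{j}$ (placing $j$ nonoverlapping length-$3$ blocks among $n-3j+j = n-2j$ objects). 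The sign-reversing/inclusion–exclusion bookkeeping then yields ${\sf C}_k^*(n,h)=\sum_{j\ge 0}(-1)^j\binom{n-2j}{j}{\sf C}_k(n-3j,h-j)$, and the upper limit $j\le h$ is automatic since each bad arc consumes one of the $h$ arcs.

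The main obstacle I expect is making the collapse/inflation map in the first part genuinely well-defined and bijective on the nose: one must verify that collapsing stacks of a structure with arc-length $\ge 3$ produces precisely a $C_k^*$-structure (not merely a $C_k$-structure) and, conversely, that every re-inflation of a $C_k^*$-structure to stack-length $\ge\sigma$ has arc-length $\ge 3$ — the latter requires that a length-$2$ arc of $\delta$, when inflated to a stack of length $\ge 2$, becomes an arc of length $\ge 3$, while a surviving isolated-over-$2$-arc configuration is exactly what $C_k^*$ rules out, so the constraint is tight. Checking that no boundary or small-$h$ case breaks the composition count (e.g.\ $h=b$, forcing all stacks to have length exactly $\sigma-1+1=\sigma$... ) is routine but needs care with the binomial conventions. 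The inclusion–exclusion in the second part is standard once the block-contraction bijection is set up, so the weight of the proof is in the first identity.
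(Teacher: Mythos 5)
Your overall strategy coincides with the paper's own proof: for eq.~(\ref{E:relation1}) the paper likewise collapses each stack onto its innermost arc, argues that arc-length $\ge 3$ forces the image into $C^*_k(n-2b,h-b)$, and recovers the multiplicity $\binom{b+(2-\sigma)(h-b)-1}{h-b-1}$ as the number of compositions of $b$ into $h-b$ parts of size $\ge\sigma-1$; for eq.~(\ref{E:inv-3}) it likewise runs inclusion--exclusion over ``bad arcs'' $(i,i+2)$ with $i+1$ isolated, taking $\binom{n-2j}{j}{\sf C}_k(n-3j,h-j)$ as the count of placements of $j$ length-$3$ blocks times a residual core-structure. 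So there is no divergence of method, and your treatment of the first identity (including the stars-and-bars count and the verification that re-inflation of a $C^*_k$-structure has arc-length $\ge 3$) matches the paper at a comparable level of detail.

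The problem lies in the step you set aside as ``standard once the block-contraction bijection is set up'': that bijection does not exist as you describe it. Deleting the three vertices of a marked bad arc from a core-structure and relabelling can create a stacked pair among the surviving arcs, so the residual diagram need not lie in ${C}_k(n-3j,h-j)$. Concretely, let $\delta$ have arcs $(1,8),(2,4),(5,7)$ on $[8]$ with $3,6$ isolated: this is a core-structure (no two arcs of the form $(i,j),(i+1,j-1)$) containing the bad arc $(2,4)$, yet removing $\{2,3,4\}$ and relabelling sends $(1,8),(5,7)$ to $(1,5),(2,4)$, a forbidden stacked pair. The insertion map in the other direction, from placements times ${C}_k(n-3j,h-j)$ into marked configurations, is well defined and injective (an arc $(i-1,i+3)$ stacking onto an inserted bad arc would have to come from a $1$-arc, which core-structures exclude), but the example shows it is not surjective. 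The discrepancy is not cosmetic: for $k=3$, $n=8$, $h=3$, $j=1$ a direct count gives $32$ pairs (core-structure, marked bad arc), distributed as $5,6,5,5,6,5$ over the six block positions, whereas $\binom{6}{1}{\sf C}_3(5,2)=6\cdot 5=30$. So the identification of the $j$-th inclusion--exclusion term with $\binom{n-2j}{j}{\sf C}_k(n-3j,h-j)$ is exactly the point that needs a real argument (or a correction), and your proposal inherits this gap rather than closing it; everything else you outline is sound and parallels the paper.
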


\begin{center}
\begin{tabular}{c|cccccccccccccccccc}
\hline $n$ & \small$1$ & \small $2$ & \small $3$ & \small $4$
&\small $5$ & \small $6$ & \small $7$ & \small $8 $
& \small$9$ & \small$10$ & \small$11$ & \small$12$ & \small$13$ & \small$14$
& \small$15$ & \small $16$ & \small $17$ & \small $18$\\
\hline ${\sf T}_{3,2}^{[3]}(n)$ & \small$1$ & \small $1$ & \small$1$
& \small $1$ & \small $1$  & \small $2$ & \small$4$ & \small$9$ &
\small $19$ & \small $40$ & \small $82$ & \small$167$ & \small$334$
& \small$682$
& \small$1398$ & \small $2917$ & \small $6142$ & \small $13025$\\
\hline \small ${\sf T}_{3,3}^{[3]}(n)$ & \small $1$ &\small $1$ &
\small$1$ & \small $1$ & \small $1$ & \small$1$ & \small$1$&  \small
$2$ & \small$4$ & \small$8$ & \small$14$ & \small$24$ &\small $40$ &
\small$68$ & \small$118$
& \small $209$ & \small $371$ & \small $654$\\
\hline
\end{tabular}
\end{center}

\begin{proof}
We consider $C^*_k(n,h)$ (eq.~(\ref{E:*}))
and call an arc $(i,i+2)$ with isolated $i+1$ a {\it bad arc}. It is
straightforward to show that there are $\binom{n-2j}{j}$ ways to
select $j$ bad arcs over $[n]$. Since removing a bad arc by
construction removes $3$ vertices we observe that the number of
configurations of at least $j$ bad arcs is given by $\binom{n-2j}{j}
{\sf C}_k(n-3j,h-j)$. Via the inclusion-exclusion principle we
accordingly arrive at
\begin{equation}
{\sf C}_{k}^{*}(n,h)=\sum_{j=0}^{h}(-1)^{j}
\binom{n-2j}{j} {\sf C}_k(n-3j,h-j) \ .
\end{equation}
We next observe that there exists a mapping from $k$-noncrossing
structures with $h$ arcs  with arc-length $\ge 3$ and stack-length
$\sigma\ge 2$ over $[n]$ into $\dot\bigcup_{\sigma-1\le b\le h-1}C^*_k
(n-2b,h-b)$:
\begin{equation}
c\colon T^{[3]}_{k,\sigma}(n,h)\rightarrow
\dot\bigcup_{0\le b\le h-1}{C}^*_k(n-2b,h-b), \quad
\delta\mapsto c(\delta)
\end{equation}
which is obtained in two steps: first induce $c(\delta)$ by
mapping arcs and isolated vertices as follows:
\begin{equation}
\forall \,\ell\ge \sigma-1;\quad
((i-\ell,j+\ell),\dots,(i,j)) \mapsto (i,j) \ \quad \text{\rm and} \quad
j \mapsto j \quad \text{\rm if $j$ is an isolated vertex}
\end{equation}
and secondly relabel the resulting diagram from left to right in increasing
order, see Figure~\ref{F:lego2-x}.\\
\begin{figure}[ht]
\centerline{%
\epsfig{file=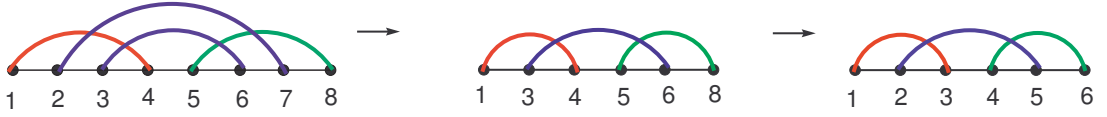,width=1\textwidth}\hskip15pt
 }
\caption{\small The mapping $c\colon
T_{k,\sigma}^{[3]}(n,h)\longrightarrow \dot\bigcup_{0\le b\le
h-1}{C}_k^{*}(n-2b,h-b)$ is obtained in two steps: first contraction
of the stacks and secondly relabeling of the resulting diagram. }
 \label{F:lego2-x}
\end{figure}
{\it Claim $1$.} $c\colon T^{[3]}_{k,\sigma}(n,h)\longrightarrow
\dot\bigcup_{\sigma-1\le b\le h-1}{C}^*_k(n-2b,h-b)$ is well-defined
and surjective.\\
By construction, $c$ does not change the crossing number.
Since $T^{[3]}_{k,\sigma}(n)$ contains only arcs of length $\ge 3$
we derive $c(T_{k,\sigma}^{[3]}(n))\subset C^*_k(n-2b,h-b)$.
Therefore $c$ is well-defined.
It remains to show that $c$ is surjective. For this purpose let $\delta\in
C^*_k(n-2b,h-b)$ and set $a=b-(\sigma-1)(h-b)$.
We proceed constructing a $k$-noncrossing structure $\tilde{\delta}$ in
three steps: \\
{\it Step $1$.} replace each label $i$ by $r_i$, where $r_i\le r_s$ if
and only if $i\le s$.\\
{\it Step $2$.} replace the leftmost arc $(r_p,r_q)$ by the sequence of arcs
\begin{equation}
\left((\tau_p-([\sigma-1]+a),\tau_q+([\sigma-1]+a)),\dots,(\tau_p,\tau_q)
\right)
\end{equation}
replace any other arc $(r_p,r_q)$ by the sequence
\begin{equation}
\left((\tau_p-[\sigma-1],\tau_q+[\sigma-1]),\dots,(\tau_p,\tau_q)\right)
\end{equation}
and each isolated vertex $r_s$ by $\tau_s$.\\
{\it Step $3$.} Set for $x,y\in\mathbb{Z}$, $\tau_b+y \le \tau_c+ x$ if and
only if ($b<c$) or ($b=c$ and $y\le x$). By construction, $\le$ is a linear
order over
$$
n-2b+2(h-b)\,(\sigma-1)+2a=
n-2b+2(h-b)\,(\sigma-1)+2(b-(\sigma-1)(h-b))=n
$$
elements, which we then label from $1$ to $n$ (left to right) in
increasing order.
It is straightforward to verify that $c(\tilde{\delta})=\delta$
holds. It remains to show that $\tilde{\delta}\in
T^{[3]}_{k,\sigma}(n)$. Suppose {\it a contrario} $\tilde{\delta}$
contains an arc $(i,i+2)$. Since $\sigma\ge 2$ we can then conclude
that $i+1$ is necessarily isolated. The arc $(i,i+2)$ is mapped by
$c$ into $(j,j+2)$ with isolated point $j+1$,
which is impossible by definition of $C^*_k(n',h')$ and Claim $1$ follows.\\
Labeling the $h$ arcs of $\delta\in {\sf T}^{[3]}_{k,\sigma}(n,h)$ from
left to right and keeping track of multiplicities gives rise to the
map
\begin{equation}\label{E:core}
f_{k,\sigma}^{} \colon T^{[3]}_{k,\sigma}(n,h) \rightarrow
\dot\bigcup_{0\le b\le h-1} \left[C_k^{*}(n-2b,h-b) \times
\left\{(a_{j})_{1\le j\le h-b}\mid\sum_{j=1}^{h-b}a_{j}=b, \
a_{j}\ge \sigma-1 \right\}\right],
\end{equation}
given by $f_{k,\sigma}^{ }(\delta)=(c(\delta),(a_{j})_{1\le j\le
h-b})$. We can conclude that $f_{k,\sigma}^{ }$ is well-defined and
a bijection. We proceed computing the multiplicities of the
resulting core-structures \cite{Reidys:07lego}:
\begin{equation}\label{E:core-1}
\vert\{(a_j)_{1\le j\le b}\mid \sum_{j=1}^{h-b}a_j=b; \ a_j\ge
\sigma -1\}\vert = {b+(2-\sigma)(h-b)-1 \choose h-b-1} \ .
\end{equation}
Eq.~(\ref{E:core-1}) and eq.~(\ref{E:core}) imply
\begin{equation}
{\sf T}^{[3]}_{k,\sigma}(n,h)=\sum_{b=\sigma-1}^{h-1}
{b+(2-\sigma)(h-b)-1 \choose h-b-1}{\sf C}^{*}_{k}(n-2b,h-b) \ ,
\end{equation}
and the theorem follows.
\end{proof}

We proceed by proving a functional identity between the bivariate
generating functions of ${\sf T}^{[3]}_{k,\sigma}(n,h)$ and ${\sf
C}^{*}_{k}(n,h)$. This identity is based on Theorem~\ref{T:core-3} and
crucial for proving Theorem~\ref{T:arc-3} in Section~\ref{S:arc-3}.
Its proof is analogous to Lemma~3 in \cite{Reidys:07lego}.

\begin{lemma}{\bf }\label{L:www-3}
Let $k,\sigma\in \mathbb{N}$, $k\ge 2$ and let $u,x$ be indeterminants.
Suppose we have
\begin{equation}\label{E:uni1}
\forall\, h\ge 1;\quad {\sf A}_{k,\sigma}^{}(n,h)=\sum_{b=\sigma-1}^{h-1}
{b+(2-\sigma)(h-b)-1 \choose h-b-1} {\sf B}_k(n-2b,h-b)\quad \text{\rm
and} \quad {\sf A}_{k,\sigma}^{}(n,0)=1 \ .
\end{equation}
Then we have the functional relation
\begin{eqnarray}\label{E:universal}
\sum_{n \ge 0}\sum_{h\le \frac{n}{2}}{\sf A}^{ }_{k,
\sigma}(n,h)u^hx^n & = &
\sum_{n \ge 0}\sum_{h \le \frac{n}{2}}{\sf B}^{ }_{k}
(n,h)\left(\frac{u\cdot (ux^2)^{
\sigma-1}}{1-ux^2}\right)^hx^n+\frac{x}{1-x}\ .
\end{eqnarray}
\end{lemma}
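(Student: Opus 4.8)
The plan is to treat the claimed identity as a formal identity between bivariate generating functions and to derive it directly from the defining relation (\ref{E:uni1}) by substituting and reorganizing the double sum. First I would introduce the generating functions
\[
F(u,x)=\sum_{n\ge 0}\sum_{h\le n/2}{\sf A}_{k,\sigma}(n,h)\,u^hx^n,
\qquad
G(u,x)=\sum_{n\ge 0}\sum_{h\le n/2}{\sf B}_k(n,h)\,u^hx^n,
\]
and split off the $h=0$ part of $F$: since ${\sf A}_{k,\sigma}(n,0)=1$ for all $n\ge 1$ (and also $n=0$), the $h=0$ contribution is $\sum_{n\ge 1}x^n = \frac{x}{1-x}$, which accounts for the additive term on the right-hand side. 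So it suffices to show that the $h\ge 1$ part of $F$ equals $G$ evaluated at $u\mapsto \frac{u\,(ux^2)^{\sigma-1}}{1-ux^2}$.

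Next I would substitute (\ref{E:uni1}) into the $h\ge 1$ part and interchange the order of summation, summing first over $h$ for fixed $b$ and fixed value $m=h-b\ge 1$ of the "core" index. Writing $h=b+m$, the binomial coefficient becomes $\binom{b+(2-\sigma)m-1}{m-1}$, the weight ${\sf B}_k(n-2b,m)$ appears, and the monomial is $u^{b+m}x^n$. Reindexing the ambient size by $n'=n-2b$ turns $x^n$ into $x^{n'}x^{2b}$ and ${\sf B}_k(n-2b,m)$ into ${\sf B}_k(n',m)$, so the sum factors as
\[
\sum_{n'\ge 0}\sum_{m\ge 1}{\sf B}_k(n',m)\,u^m x^{n'}
\sum_{b\ge (\sigma-1)m}\binom{b+(2-\sigma)m-1}{m-1}u^b x^{2b}.
\]
The inner sum over $b$ is the key computation: setting $b=(\sigma-1)m+t$ with $t\ge 0$, the binomial coefficient becomes $\binom{m-1+t}{m-1}$, and $\sum_{t\ge 0}\binom{m-1+t}{m-1}(ux^2)^t = (1-ux^2)^{-m}$ by the negative binomial series, while the prefactor $u^b x^{2b}=(ux^2)^{(\sigma-1)m}(ux^2)^t$ contributes $(ux^2)^{(\sigma-1)m}$. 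Collecting the $u$- and $x$-powers attached to the index $m$, namely $u^m\cdot (ux^2)^{(\sigma-1)m}\cdot(1-ux^2)^{-m} = \bigl(\tfrac{u(ux^2)^{\sigma-1}}{1-ux^2}\bigr)^m$, we recognize exactly $G(u,x)$ with $u$ replaced by that rational expression, which is what we wanted.

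The main obstacle — really the only nontrivial point — is handling the convergence/validity of the interchange of summations and of the geometric-type series in $b$. Since everything is a formal power series in $x$ (each coefficient of a fixed power of $x$ involves only finitely many terms, as $h\le n/2$ and $b\le h$ bound the indices), the rearrangements are legitimate at the level of formal power series, and the identity $\sum_{t\ge 0}\binom{m-1+t}{m-1}y^t=(1-y)^{-m}$ holds formally in $y=ux^2$. I would also note that the lower limit $b\ge\sigma-1$ in (\ref{E:uni1}) forces $t=b-(\sigma-1)m\ge (\sigma-1)(1-m)$, but for $m\ge1$ this is $\le 0$, so combined with $b\le h-1$ i.e. $t\le m-1-(\sigma-1)m\cdot 0$... more carefully, the constraint is simply $b\ge\sigma-1$ and $b\le h-1=b+m-1$; the substitution $b=(\sigma-1)m+t$ with $t\ge 0$ is exactly the set of $b$ for which $\binom{b+(2-\sigma)m-1}{m-1}$ is a nonzero binomial coefficient with nonnegative top argument, so extending the inner sum to all $t\ge 0$ introduces no spurious terms. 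This is the same bookkeeping as in Lemma~3 of \cite{Reidys:07lego}, and once it is in place the computation closes.
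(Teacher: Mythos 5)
Your proposal is correct and follows essentially the same route as the paper's proof: substitute the defining relation, split off the $h=0$ contribution as $\frac{x}{1-x}$, reindex with $m=h-b$ and $n'=n-2b$, and evaluate the inner sum over $b$ via the negative binomial series to produce the substitution $u\mapsto \frac{u(ux^2)^{\sigma-1}}{1-ux^2}$. Your explicit shift $b=(\sigma-1)m+t$ and the observation that the binomial vanishes for smaller $b$ is in fact slightly more careful than the paper's treatment of that step.
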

\begin{proof}
We set $\sum_{n \ge 0}\sum_{h\le\frac{n}{2}}{\sf B}_{k}^{ }
(n,h)u^hx^n=\sum_{h\ge 0}\varphi_{h}^{ }(x)u^h$ and compute in view
of eq.~(\ref{E:uni1})
\begin{equation}
\sum_{n \ge 0}\sum_{h\le \frac{n}{2}}
{\sf A}_{k,\sigma}^{ }(n,h)u^hx^n
 =
\sum_{n \ge 0}\sum_{h \le \frac{n}{2}}\sum_{b\le h-1}{\sf
B}_{k}^{ }(n-2b,h-b){b+(2-\sigma)(h-b)-1 \choose
h-b-1}u^h x^n + \sum_{i\ge 1}x^i
\end{equation}
where the term $\sum_{i\ge 1}x^i=\frac{x}{1-x}$ comes from the fact that
for $h=0$ the binomial
$$
{b+(2-\sigma)(h-b)-1 \choose h-b-1}
$$
is zero, while for any $i\ge 1$ the lhs counts ${\sf
A}_{k,\sigma}(n,0)=1$. We proceed by computing
\begin{eqnarray*}
&= & \sum_{h \ge 0}\sum_{b\le h-1}\sum_{n \ge 2h}{\sf B}_{k}^{ }
(n-2b,h-b)x^{n-2b}{b+(2-\sigma)(h-b)-1
\choose h-b-1}u^h x^{2b}+ \frac{x}{1-x}\\
&=& \sum_{b \ge 0}\sum_{b\le
h}\varphi_{h-b}^{ }(x){b+(2-\sigma)(h-b)-1 \choose h-b-1}u^h
x^{2b} + \frac{x}{1-x} \ .
\end{eqnarray*}
Setting $m=h-b$ and subsequently interchanging the summation indices we
arrive at
\begin{eqnarray*}
\sum_{n \ge 0}\sum_{h\le \frac{n}{2}}
{\sf A}_{k,\sigma}^{ }(n,h)u^hx^n
&= & \sum_{b \ge 0}\sum_{1\le m}\varphi_{m}^{ }(x){
b+(2-\sigma)m-1\choose m-1}u^m(ux^2)^b + \frac{x}{1-x}\\
&=& \sum_{m \ge
0}\varphi_{m}^{ }(x)\left(\frac{u\cdot(ux^2)^{\sigma-1}}
{1-ux^2}\right)^m + \frac{x}{1-x}\\
&=& \sum_{n \ge 0}\sum_{h \le \frac{n}{2}}{\sf B}_{k}^{ }
(n,h)\left(\frac{u\cdot
(ux^2)^{\sigma-1}}{1-ux^2}\right)^hx^n + \frac{x}{1-x} \ ,
\end{eqnarray*}
whence Lemma~\ref{L:www-3}.
\end{proof}

According to Lemma~\ref{L:www-3} and eq.~(\ref{E:relation1}) we have
\begin{eqnarray}\label{E:h1}
{\sf T}_{k,\sigma}(n,h) & = & \sum_{b=\sigma-1}^{h-1}
{b+(2-\sigma)(h-b)-1 \choose h-b-1} {\sf C}_k(n-2b,h-b)\\
\label{E:h2}
{\sf T}_{k,\sigma}^{[3]}(n,h) & = & \sum_{b=\sigma-1}^{h-1}
{b+(2-\sigma)(h-b)-1 \choose h-b-1} {\sf C}_k^*(n-2b,h-b) \ .
\end{eqnarray}
and Lemma~\ref{L:www-3} implies the following two
functional identities, which are instrumental for the proof
of Theorem~\ref{T:arc-3} in Section~\ref{S:arc-3}.
\begin{eqnarray}\label{E:a}
\sum_{n \ge 0}\sum_{h\le \frac{n}{2}}{\sf T}^{ }_{k,
\sigma}(n,h)u^hx^n & = &
\sum_{n \ge 0}\sum_{h \le \frac{n}{2}}{\sf
C}^{ }_{k}(n,h)\left(\frac{u\cdot (ux^2)^{
\sigma-1}}{1-ux^2}\right)^hx^n+\frac{x}{1-x}\\
\label{E:d}
\sum_{n \ge 0} {\sf T}^{[3] }_{k,\sigma}(n)x^n & = &
\sum_{n \ge 0}\sum_{h \le \frac{n}{2}}{\sf
C}^{*}_{k}(n,h)\left(\frac{ (x^2)^{\sigma-1}}{1-x^2}\right)^hx^n
+\frac{x}{1-x} \ .
\end{eqnarray}


\section{Asymptotic Enumeration}\label{S:arc-3}


In this Section we study the asymptotics of
$k$-noncrossing RNA pseudoknot structures with arc-length $\ge 3$
and minimum stack length $\sigma$.
We are particulary interested in deriving simple formulas, that can be
used assessing the complexity of prediction algorithms for $k$-noncrossing
RNA structures. In order to state Theorem~\ref{T:arc-3}
below we introduce the following rational functions
\begin{eqnarray}
\label{E:w}
w_0(x)       & = & \frac{x^{2\sigma-2}}{1-x^2}\\
\label{E:z}
z_{0}(x)     & = & \frac{x}{1+w_{0}(x)x^3}        \\
\label{E:u}
u_0(x)       & = & \frac{w_0}{1+w_0(x)z_0(x)^2}       \\
\label{E:v}
v_0(x)       & = & \frac{1}{x}\frac{z_0(x)}{u_0(x)z_0(x)^2-z_0(x)+1} \ .
\end{eqnarray}

\begin{theorem}\label{T:arc-3}
Let $k,\sigma\in\mathbb{N}$, $k,\sigma\ge 2$, $x$ be an indeterminant
and $\rho_k$ the dominant, positive real singularity of $\sum_{n\ge
0}f_k(2n,0)z^n$ (eq.~(\ref{E:rho-k})).
Then ${\sf T}^{[3]}_{ k,\sigma}(n)$,
the number of RNA structures with arc-length $\ge 3$ and $\sigma\ge 2$
satisfies the following identity
\begin{equation}\label{E:functional-arc-3}
\sum_{n \ge 0}{\sf T}_{k,\sigma}^{[3]}(n)x^n=
v_0(x)\,\sum_{n \ge
0}f_k(2n,0)\left(\frac{\sqrt{u_{0}(x)}z_0(x)}{u_{0}(x)
z_{0}(x)^2-z_{0}(x)+1}\right)^{2n}+\frac{x}{1-x}-
\frac{z_0(x)^2}{1-z_0(x)}
\end{equation}
where $w_0(x),z_0(x),u_0(x),v_0(x)$ are given by
eq.~(\ref{E:w})--eq.~(\ref{E:v}).
Furthermore
\begin{equation}\label{E:growth-3}
{\sf T}_{k,\sigma}^{[3]}(n) \sim\varphi_{k}(n)\,
\left(\frac{1}{\gamma_{k,\sigma}^{[3]}}\right)^n
\end{equation}
holds, where $\gamma_{k,\sigma}^{[3]}$ is the positive real dominant
singularity of $\sum_{n \ge 0}{\sf T}_{k,\sigma}^{}(n)z^n$ and
minimal real solution of the equation
\begin{equation}
\frac{\sqrt{u_{0}(x)}z_0(x)}{u_{0}(x)z_{0}(x)^2-z_{0}(x)+1}=\rho_k
\end{equation}
and $\varphi_k(n)$ is a polynomial over $n$ derived from the
asymptotic expression of $f_k(2n,0)\sim \varphi_k(n)\left(\frac{1}
{\rho_k}\right)^n$ of eq.~(\ref{E:f-k-imp}).
\end{theorem}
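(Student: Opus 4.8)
The plan is to read off the functional identity \eqref{E:functional-arc-3} by composing the identities already available, and then to obtain \eqref{E:growth-3} by a singularity analysis that mirrors the proof of Theorem~\ref{T:core}. The starting point is \eqref{E:d}, which already writes
\[
\sum_{n\ge 0}{\sf T}^{[3]}_{k,\sigma}(n)x^{n}=G^{*}_{k}\bigl(x,w_{0}(x)\bigr)+\frac{x}{1-x},\qquad
G^{*}_{k}(x,u):=\sum_{n\ge 0}\sum_{h\le n/2}{\sf C}^{*}_{k}(n,h)\,u^{h}x^{n},
\]
so everything reduces to a closed form for $G^{*}_{k}\bigl(x,w_{0}(x)\bigr)$. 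First I would turn the inclusion--exclusion \eqref{E:inv-3} into a generating-function statement: writing $\binom{n-2j}{j}=\binom{(n-3j)+j}{j}$ and summing over $j$ by means of $\sum_{j\ge 0}\binom{m+j}{j}y^{j}=(1-y)^{-(m+1)}$ yields the substitution
\[
G^{*}_{k}(x,u)=\frac{1}{1+ux^{3}}\,G_{k}\!\Bigl(\frac{x}{1+ux^{3}},\,u\Bigr),\qquad
G_{k}(x,u):=\sum_{n\ge 0}\sum_{h\le n/2}{\sf C}_{k}(n,h)\,u^{h}x^{n}.
\]
Evaluating at $u=w_{0}(x)$ and recalling that $z_{0}(x)=x/(1+w_{0}(x)x^{3})$ turns this into $G^{*}_{k}\bigl(x,w_{0}\bigr)=\tfrac{z_{0}}{x}\,G_{k}(z_{0},w_{0})$; this is exactly where the auxiliary series $z_{0}$ of \eqref{E:z} is forced upon us.

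Next I would compute the bivariate core generating function $G_{k}$. Taking $\sigma=1$ in \eqref{E:a} gives $\sum_{n,h}{\sf T}_{k,1}(n,h)u^{h}x^{n}=G_{k}\bigl(x,\tfrac{u}{1-ux^{2}}\bigr)+\tfrac{x}{1-x}$; inverting the substitution $u\mapsto \tfrac{u}{1-ux^{2}}$ (inverse $v\mapsto\tfrac{v}{1+vx^{2}}$) and inserting the closed form of the left-hand side from Lemma~\ref{L:arc-2} yields
\[
G_{k}(x,v)=\frac{1}{\tfrac{v}{1+vx^{2}}x^{2}-x+1}\sum_{n\ge 0}f_{k}(2n,0)\Bigl(\frac{\sqrt{\tfrac{v}{1+vx^{2}}}\,x}{\tfrac{v}{1+vx^{2}}x^{2}-x+1}\Bigr)^{\!2n}-\frac{x}{1-x},
\]
which is the bivariate lift of Theorem~\ref{T:core} (it specialises at $v=1$ to \eqref{E:C1}, up to the arc-less term $\tfrac{x}{1-x}$). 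I would then substitute $x=z_{0}$, $v=w_{0}$ and simplify: the inner ``$u_{1}$-type'' argument becomes $\tfrac{w_{0}}{1+w_{0}z_{0}^{2}}=u_{0}(x)$, the outer prefactor becomes $\tfrac{z_{0}}{x}\cdot\tfrac{1}{u_{0}z_{0}^{2}-z_{0}+1}=v_{0}(x)$, and the whole composition collapses onto the right-hand side of \eqref{E:functional-arc-3}. The two rational summands $\tfrac{x}{1-x}$ and $-\tfrac{z_{0}^{2}}{1-z_{0}}$ are pure bookkeeping: they collect the arc-less ($h=0$) diagrams that are carried along at every substitution step, exactly as in Lemma~\ref{L:www-3} and \cite{Reidys:07lego}. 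This portion of the argument is routine but computation-heavy.

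For the asymptotics I would argue as in the proof of Theorem~\ref{T:core}. Write the right-hand side of \eqref{E:functional-arc-3} as $v_{0}(x)\,F_{k}\bigl(\theta(x)\bigr)+R(x)$, where $F_{k}(w):=\sum_{n\ge 0}f_{k}(2n,0)w^{2n}$ has dominant positive real singularity $\rho_{k}$ and, by \eqref{E:f-k-imp}, is $\Delta_{\rho_{k}}$-analytic with a singular expansion whose coefficients behave like $\varphi_{k}(n)\rho_{k}^{-n}$, while $\theta(x)=\tfrac{\sqrt{u_{0}(x)}\,z_{0}(x)}{u_{0}(x)z_{0}(x)^{2}-z_{0}(x)+1}$ is algebraic in $x$ and $R(x)$ is rational. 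Let $\gamma:=\gamma^{[3]}_{k,\sigma}$ be the minimal positive solution of $\theta(x)=\rho_{k}$. One then checks that $\theta$, $v_{0}$, $R$ and $w_{0},z_{0},u_{0}$ are analytic on some $\Delta_{\gamma}$-domain, that $\theta'(\gamma)\neq 0$ so $\theta(x)-\rho_{k}$ vanishes to first order at $\gamma$, and hence that $F_{k}\bigl(\theta(x)\bigr)$, and with it the right-hand side of \eqref{E:functional-arc-3}, is $\Delta_{\gamma}$-analytic and is $O$ of a fixed rescaling $\widetilde F$ of $\sum_{n}f_{k}(2n,0)z^{n}$ whose $n$-th coefficient is $\sim\varphi_{k}(n)\gamma^{-n}$. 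Theorem~\ref{T:transfer} then gives ${\sf T}^{[3]}_{k,\sigma}(n)\sim K\,[x^{n}]\widetilde F\sim\varphi_{k}(n)\gamma^{-n}$, absorbing $K$ into the polynomial, which is \eqref{E:growth-3}.

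The main obstacle is the last step: one must verify that $\gamma$ is really the \emph{dominant} singularity, i.e. that it is strictly smaller in modulus than every pole introduced by the rational substitutions --- the zeros of $1+w_{0}x^{3}$, of $1+w_{0}z_{0}^{2}$ and of $u_{0}z_{0}^{2}-z_{0}+1$, and the point $x=1$ coming from $w_{0}$ --- so that the $\Delta_{\gamma}$-analyticity required by Theorem~\ref{T:transfer} genuinely holds. This is precisely where the explicit numerical values for $k=3,4,5$ and $\sigma=2$ quoted in the abstract (the exponential bases $2.5723,\ 3.0306,\ 3.4092$ and the subexponential exponents $-5,\ -\tfrac{21}{2},\ -18$) have to be produced and checked, by locating $\rho_{k}$, solving $\theta(x)=\rho_{k}$, and reading off $\varphi_{k}$ from \eqref{E:f-k-imp}. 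Two smaller points have to be pinned down beforehand: that $F_{k}$ has a genuine, non-polar $\Delta$-type singularity at $\rho_{k}$ --- inherited via the subtraction-of-singularities principle from the Bessel-function asymptotics behind \eqref{E:f-k-imp} --- and the transversality $\theta'(\gamma)\neq 0$.
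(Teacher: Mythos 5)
Your proposal is correct and follows essentially the same route as the paper: the same inclusion--exclusion-to-substitution step for ${\sf C}^*_k$ versus ${\sf C}_k$ (the paper evaluates $\sum_b\frac{(m+b)!}{b!}y^b$ by a Laplace transform where you invoke the negative binomial series, which is the same computation), the same composition through eq.~(\ref{E:d}), Lemma~\ref{L:www-3} at $\sigma=1$ and Lemma~\ref{L:arc-2} with $u_0$ chosen so that $u_0/(1-u_0z_0^2)=w_0$, and the same Pringsheim-plus-transfer-theorem argument for the asymptotics. Your explicit flagging of the dominance of $\gamma^{[3]}_{k,\sigma}$ and the $\Delta$-analyticity requirements matches what the paper handles in its Claim~2 and its appeal to \cite{Wang:07}.
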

Theorem~\ref{T:arc-3} implies the following growth rates for $3$-, $4$- and
$5$-noncrossing RNA structures with arc-length $\ge 3$ and stack-length
$\ge 2,3$:
\begin{eqnarray*}
&& (\gamma_{3,2}^{[3]})^{-1} = 2.5723 \qquad \qquad
(\gamma_{4,2}^{[3]})^{-1}=3.0306   \qquad \qquad
   (\gamma_{5,2}^{[3]})^{-1} = 3.4092    \\
&& (\gamma_{3,3}^{[3]})^{-1} = 2.0392 \qquad \qquad
(\gamma_{4,3}^{[3]})^{-1}=2.2663     \qquad  \qquad
(\gamma_{5,3}^{[3]})^{-1}=2.4442 \ .
\end{eqnarray*}
Furthermore we compare in the following table the exact subexponential
factors ${\sf T}_{k,3}^{[3]}(n)\,(\gamma_{k,2}^{[3]})^{n}$ computed
via Theorem~\ref{T:core-3} with the subexponential factors $\varphi_k(n)$
obtained from Theorem~\ref{T:transfer}:
\begin{center}
\begin{tabular}{|c|c|c|c|c|c|c|}
\hline
  \multicolumn{7}{|c|}{\bf The subexponential factor}\\
\hline $n$ & ${\sf T}_{3,3}^{[3]}(n)\,(\gamma_{3,3}^{[3]})^{n}$ &
$t_{3,3}^{[3]}(n)$ &
${\sf T}_{4,3}^{[3]}(n)\,(\gamma_{4,3}^{[3]})^{n}$ & $t_{4,3}^{[3]}(n)$
& ${\sf T}_{5,3}^{[3]}(n)\,(\gamma_{5,3}^{[3]})^{n}$ & $t_{5,3}^{[3]}(n)$\\
\hline \small$50$ & \small $4.93\times 10^{-5}$ & \small $3.97\times
10^{-6}$ & \small$4.59\times 10^{-7}$ & \small$2.13\times 10^{-10}$
& \small$1.53\times 10^{-8}$ & \small$7.81\times 10^{-9}$\\
\hline \small$60$ & \small $2.36\times 10^{-5}$ & \small$1.54\times
10^{-6}$
& \small$4.98\times 10^{-8}$ & \small$3.14\times 10^{-11}$
& \small$2.11 \times 10^{-9}$ & \small$2.93\times 10^{-10}$\\
\hline \small$70$ & \small $1.25\times 10^{-5}$ & \small$6.94\times
10^{-7}$
& \small$2.86\times 10^{-9}$ & \small$6.22\times 10^{-12}$
& \small$3.24\times 10^{-10}$ & \small$1.83\times 10^{-11}$\\
\hline \small$80$ & \small $7.07\times 10^{-6}$ & \small$3.49\times
10^{-7}$
& \small$1.26\times 10^{-10}$ & \small$1.53\times 10^{-12}$
& \small$1.71\times 10^{-12}$ & \small$1.65\times 10^{-12}$\\
\hline \small$90$ & \small $4.25\times 10^{-6}$ & \small$1.91\times
10^{-7}$
& \small$4.64\times 10^{-12}$ & \small$4.44\times 10^{-13}$
& \small$6.00\times 10^{-13}$ & \small$1.99\times 10^{-13}$\\
\hline \small$100$ & \small $2.68\times 10^{-6}$ & \small$1.12\times
10^{-7}$
& \small$1.47\times 10^{-13}$ & \small$1.47\times 10^{-13}$ & \small$2.98\times10^{-14}$
& \small$2.98\times 10^{-14}$\\
\hline
\end{tabular}
\end{center}
\begin{proof}
In the following we will use the notation $w_{0},u_{0},z_{0}$,
eq.~(\ref{E:w})--eq.~(\ref{E:v}),
for short without specifying the variable $x$. The first step consists
in deriving a functional equation relating the bivariate generating functions
of $C^*_k(n,h)$ and $C_k(n',h')$. For this purpose
we use eq.~(\ref{E:inv-3})
$
{\sf C}_{k}^{*}(n,h)=\sum_{b\le\frac{n}{2}}(-1)^b{n-2b \choose b}
{\sf C}_{k}^{}(n-3b,h-b)
$.\\
{\it Claim $1$.}
\begin{eqnarray}\label{E:dagger}
\sum_{n\ge 0}\sum_{h \le \frac{n}{2}}{\sf C}_{k}^{*}(n,h)w^hx^n &=&
\frac{1}{1+wx^3}\sum_{n \ge 0}\sum_{h
\le\frac{n}{2}}{\sf C}_{k}^{}(n,h)w^h(\frac{x}{1+wx^3})^n \ .
\end{eqnarray}
To prove Claim $1$ we compute
\begin{eqnarray*}
\sum_{n\ge 0}\sum_{h \le \frac{n}{2}}{\sf C}_{k}^{*}(n,h)w^hx^n &=&
\sum_{n \ge 0}\sum_{h \le \frac{n}{2}}\sum_{b\le h}(-1)^b{n-2b \choose b}
{\sf C}_{k}^{}(n-3b,h-b)w^hx^n \\
&=&\sum_{n \ge 0}\sum_{b\le\frac{n}{3}}\sum_{h \le \frac{n}{2}}
(-1)^b{n-2b \choose b}{\sf C}_{k}^{}(n-3b,h-b)w^hx^n\\
&=&\sum_{n \ge 0}\sum_{b \le \frac{n}{3}}(-1)^b{n-2b \choose b}
\sum_{h \le \frac{n}{2}}{\sf C}_{k}^{}(n-3b,h-b)w^hx^n \ .\\
\end{eqnarray*}
We rearrange the summation over $h$ and arrive at
$$
\sum_{n \ge 0}\sum_{b \le \frac{n}{3}}(-1)^b{n-2b \choose b}
\left[\sum_{h \le
\frac{n}{2}}{\sf C}_{k}^{}(n-3b,h-b)w^{h-b}\right]w^bx^n \ .
$$
Setting $\varphi_{n}(w)=\sum_{j\le\frac{n}{2}}
{\sf C}_{k}^{}(n,j)w^j$ this becomes
\begin{eqnarray*}
&=&\sum_{n \ge 0}\sum_{b \le\frac{n}{3}}(-1)^b{n-2b \choose
b}\varphi_{n-3b}(w)w^bx^n \\
&=&\sum_{b \ge 0}\frac{(-wx^3)^b}{b!}\sum_{n \ge 3b}
\frac{(n-2b)!}{(n-3b)!}\varphi_{n-3b}(w)x^{n-3b} \mbox{\quad where } \
m=n-3b  \\
&=&\sum_{b \ge 0}\frac{(-wx^3)^b}{b!}\sum_{m \ge
0}\frac{(m+b)!}{m!}\varphi_{m}(w)x^{m}\\
&=&\sum_{m \ge 0}\varphi_{m}(x)\frac{x^m}{m!} \sum_{b \ge
0}\frac{(-wx^3)^b}{b!}(m+b)!
\end{eqnarray*}
Laplace transformation of the series $\sum_{b \ge
0}\frac{(m+b)!}{b!}y^b$ yields
\begin{eqnarray*}
\sum_{b \ge 0}\frac{(m+b)!}{b!}y^b & = & \int_{0}^{\infty}\sum_{b
\ge 0}\frac{y^b}{b!}t^{m+b}e^{-t}dt\\
&=&\int_{0}^{\infty}t^m e^{-(1-y)t}dt\\
&=&\frac{1}{(1-y)^{m+1}}\int_{0}^{\infty} ((1-y)t)^m
e^{-(1-y)t}d((1-y)t)\\
&=&\frac{m!}{(1-y)^{m+1}} \ .
\end{eqnarray*}
Hence the bivariate generating function can be written as
\begin{eqnarray*}
\sum_{n \ge 0}\sum_{h \le\frac{n}{2}}C_{k}^{*}(n,h)w^hx^n & = &
\sum_{m \ge
0}\frac{\varphi_{m}(w)x^m}{m!}\frac{m!}{(1+wx^3)^{m+1}}\\
&=&\frac{1}{1+wx^3}\sum_{m \ge
0}\varphi_{m}(w)(\frac{x}{1+wx^3})^m\\
&=&\frac{1}{1+wx^3}\sum_{n \ge 0}\sum_{h \le\frac{n}{2}}{\sf
C}_{k}^{}(n,h)w^h(\frac{x}{1+wx^3})^n
\end{eqnarray*}
and the Claim $1$ follows.
According to eq.~(\ref{E:d}), we have
\begin{eqnarray}\label{E:aha0}
\sum_{n \ge 0}{\sf T}_{k,\sigma}^{[3]}(n)x^n & = &\sum_{n \ge 0}\sum_{h
\le\frac{n}{2}} {\sf C}_{k}^{*}(n,h)\left(\frac{ (x^2)^{\sigma-1}}{1-x^2}
\right)^hx^n
+\frac{x}{1-x} \ ,
\end{eqnarray}
and Claim $1$ provides, setting
\begin{equation}\label{E:w0}
w_{0}=\frac{ (x^2)^{\sigma-1}}{1-x^2} \ ,
\end{equation}
the
following interpretation of the rhs of eq.~(\ref{E:aha0}):
\begin{eqnarray}\label{E:aha}
\sum_{n \ge 0}\sum_{h \le\frac{n}{2}} {\sf C}_{k}^{*}(n,h)
\left(\frac{ (x^2)^{\sigma-1}}{1-x^2}\right)^hx^n
&=&\frac{1}{1+w_{0}x^3}\left[\sum_{n \ge 0}\sum_{h
\le\frac{n}{2}}{\sf
C}_{k}^{}(n,h)w_{0}^h(\frac{x}{1+w_{0}x^3})^n\right] \ .
\end{eqnarray}
According to eq.~(\ref{E:a}) and Lemma~\ref{L:arc-2} we have
\begin{eqnarray}\label{E:b1}
\sum_{n
\ge 0}\sum_{h \le\frac{n}{2}}{\sf T}_{k,1}(n,h)u^hz^n & = &
\left[\sum_{n \ge 0}\sum_{h \le \frac{n}{2}}{\sf C}_{k}^{}(n,h)(
\frac{u}{1-uz^2})^hz^n \right]+\frac{z}{1-z}\\
\label{E:b2}
\sum_{n \ge 0}\sum_{h \le\frac{n}{2}}{\sf T}_{k,1}(n,h)u^hz^n
&=&\frac{1}{uz^2-u+1}\sum_{n \ge
0}f_k(2n,0)\left(\frac{\sqrt{u}z}{uz^2-z+1}\right)^{2n}
\end{eqnarray}
and consequently
\begin{equation}\label{E:aha2}
\sum_{n \ge 0}\sum_{h \le \frac{n}{2}}{\sf C}_{k}^{}(n,h)(
\frac{u}{1-uz^2})^hz^n = \frac{1}{uz^2-z+1}\sum_{n \ge
0}f_k(2n,0)\left(\frac{\sqrt{u}z}{uz^2-z+1}\right)^{2n}-\frac{z}{1-z}
\end{equation}
holds. Suppose now
\begin{eqnarray}\label{E1}
z_0(x) & = &  \frac{x}{1+w_0x^3} \\
\label{E:2} u_0(x) & = &  \frac{w_0}{1+w_0z_0^2} \ .
\end{eqnarray}
Then we have the formal identity $\frac{u_0}{1-u_0z_0^2}=w_0$ and
 obtain substituting eq.~(\ref{E:aha}) into eq.~(\ref{E:aha0})
\begin{eqnarray*}
\sum_{n \ge 0}{\sf T}_{k,2}^{[3]}(n)x^n & =  &
\frac{1}{1+w_{0}x^3}\left[\sum_{n \ge 0}\sum_{h\le\frac{n}{2}} {\sf
C}_{k}^{}(n,h)w_{0}^h(\frac{x}{1+w_{0}x^3})^n \right] +\frac{x}{1-x} \ .
\end{eqnarray*}
Next we use eq.~(\ref{E:b1}) and the formal identity
$\frac{u_0}{1-u_0z_0^2}=w_0$ to arrive at:
\begin{eqnarray*}
\sum_{n \ge 0}{\sf T}_{k,2}^{[3]}(n)x^n & = & \frac{1}{1+w_{0}x^3} \
\left[ \sum_{n \ge 0}\sum_{h \le \frac{n}{2}}{\sf C}_{k}^{}(n,h)
(\frac{u_0}{1-u_0z_0^2})^hz_0^n \right]+\frac{x}{1-x} \ .
\end{eqnarray*}
Setting $t_0(x)=\frac{1}{1+w_0x^3}$, $t_1(x)=\frac{x}{1-x}$ and
$t_2(x)= \frac{z_0}{1-z_0}$ we derive via eq.~(\ref{E:aha2})
\begin{equation}\label{E:well}
\sum_{n \ge 0}{\sf T}_{k,2}^{[3]}(n)x^n  = t_0(x)\, \
\left[\frac{1}{u_0z_0^2-z_0+1}\sum_{n \ge
0}f_k(2n,0)\left(\frac{\sqrt{u_0}z_0}{u_0z_0^2-z_0+1}\right)^{2n}\right]
-t_0(x) t_2(x) + t_1(x) \ .
\end{equation}
We compute
\begin{eqnarray*}
t_0(x) & = & \frac{x-x^3}{x^{2\sigma+1}-x^2+1}  \\
t_1(x)-t_0t_2(x) &=&
\frac{f(x)}{(1-x-x^2+x^3+x^{2\sigma+1})(x-1)(x^2-1-x^{2\sigma+1})} \ ,
\end{eqnarray*}
where $f(x)$ is a polynomial of degree $4\sigma+2$. Therefore $t_0(x)$ and
$t_1(x)-t_0t_2(x)$ are analytic in $\{z\in\mathbb{C}\mid 0< \vert z\vert<
\frac{1}{2}\}$ and accordingly do not introduce any singularities of
$$
\frac{1}{u_0z_0^2-z_0+1}\sum_{n \ge
0}f_k(2n,0)\left(\frac{\sqrt{u_0}z_0}{u_0z_0^2-z_0+1}\right)^{2n}
$$
in the domain $\{z\in\mathbb{C}\mid 0< \vert z\vert< \frac{1}{2}\}$.
Furthermore we can conclude from eq.~(\ref{E:well}) that $x=0$ is a
removable singularity.
Let us denote
$
V(z)=\sum_{n \ge 0}f_k(2n,0)\left(\frac{\sqrt{u_0}z_0}
                                  {u_0z_0^2-z_0+1}\right)^{2n}
$.\\
{\it Claim $2$.} All dominant singularities of $\sum_{n \ge 0}
{\sf T}_{k,2}^{[3]}(n)z^n$ are singularities of $V(z)$. Furthermore
the unique, minimal, positive, real solution of
\begin{equation}
\frac{\sqrt{u_0}z_0}{u_0z_0^2-z_0+1} = \rho_k
\end{equation}
denoted by $\gamma_{k,\sigma}^{[3]}$ is a dominant singularity of
$\sum_{n\ge 0} {\sf T}_{k,\sigma}^{[3]}(n) z^n$. \\
Clearly, a dominant singularity of
$
\frac{1}{u_0z^2-z+1} V(z)
$
is either a singularity of $V(z)$ or $\frac{1}{u_0z^2-z+1}$. Suppose there
exists some singularity $\zeta\in\mathbb{C}$ which is a root of
$\frac{1}{u_0z^2-z+1}$. By construction $\zeta\neq 0$ and $\zeta$ is
necessarily a non-finite singularity of $V(z)$. If $\vert \zeta\vert \le
\gamma_{k,\sigma}^{[3]}$, then we arrive at the contradiction
$\vert V(\zeta)\vert>\vert V(\kappa_k)\vert$ since $V(\zeta)$
is not finite and
$$
V(\kappa_k)=\sum_{n\ge 0}f_k(2n,0)\rho_k^{2n}<\infty \ .
$$
Therefore all dominant singularities of
$\sum_{n\ge 0} {\sf T}_{k,\sigma}^{[3]}(n) z^n$
are singularities of $V(z)$. According to  Pringsheim's
Theorem~\cite{Titmarsh:39}, $\sum_{n\ge 0} {\sf T}_{k,\sigma}^{[3]}(n) z^n$
has a dominant positive real singularity which by construction equals
$\gamma_{k,\sigma}^{[3]}$ being the minimal positive real solution of
\begin{equation}
\frac{\sqrt{\frac{(x^2)^{\sigma-1}}{(x^2)^{\sigma}-x^2+1}}\ x}
{\left(\frac{(x^2)^{\sigma-1}}{(x^2)^{\sigma}-x^2+1}\right)\,
x^2-x+1}=\rho_k
\end{equation}
and the Claim $2$ follows.\\
Claim $2$ immediately implies that the inverse of ${\gamma_{k,\sigma}^{[3]}}$
equals the exponential growth-rate. According to \cite{Wang:07} the power
series $\sum_{n\ge 0}f_k(2n,0)z^n$ has an analytic continuation in a
$\Delta_{\rho_k}$-domain and we have $[z^n]V(z)\sim K \varphi_k(n)
(\rho^{-1})^n$, where $\varphi_k(n)$ is given by eq.~(\ref{E:f-k-imp}).
We can therefore employ Theorem~\ref{T:transfer}, which
allows us to transfer the subexponential factors from the asymptotic
expressions for $f_k(2n,0)$ to ${\sf T}_{k,\sigma}(n)$ and
eq.~(\ref{E:growth-3}) follows. This completes the proof of
Theorem~\ref{T:arc-3}.
\end{proof}


{\bf Acknowledgments.}
We are grateful to Prof.~W.Y.C.~Chen for stimulating discussions and helpful
comments.
This work was supported by the 973 Project, the PCSIRT Project of the
Ministry of Education, the Ministry of Science and Technology, and
the National Science Foundation of China.

\bibliographystyle{plain}



\end{document}